\documentclass[a4paper,aps,pra,showpacs,superscriptaddress,twocolumn,nofootinbib]{revtex4-1}      
\usepackage[utf8]{inputenc}  
\usepackage[T1]{fontenc}     
\usepackage[british]{babel}  
\usepackage[sc,osf]{mathpazo}\linespread{1.05}  
\usepackage[scaled=0.86]{berasans}  
\usepackage[scaled=1.03]{inconsolata} 
\usepackage[colorlinks,linkcolor=magenta,urlcolor=blue]{hyperref}  
\usepackage{graphicx} 
\usepackage[babel]{microtype}  
\usepackage{amsmath,amssymb,amsthm,bm,mathtools,amsfonts,mathrsfs,bbm} 
\usepackage{xspace}  
\usepackage{pgfplots}  

\DeclareMathOperator{\tr}{tr}

\newcommand{\ket}[1]{\left| #1 \right\rangle}

\newcommand{\ketbra}[2]{\left|#1\middle\rangle\middle\langle#2\right|}

\newcommand{\mean}[1]{\left\langle#1\right\rangle}

\newcommand{\ie}{\emph{i.e.}\@\xspace}

\newtheorem{theorem}{Theorem}


\def\be{\begin{equation}}
\def\ee{\end{equation}}
\mathtoolsset{centercolon}

\begin{document}
\title{Joint measurability, Einstein-Podolsky-Rosen steering, and Bell nonlocality}
\author{Marco Túlio Quintino}
\affiliation{Département de Physique Théorique, Université de Genève, 1211 Genève, Switzerland}
\author{Tamás V\'ertesi}
\affiliation{Institute for Nuclear Research, Hungarian Academy of Sciences, H-4001 Debrecen, P.O. Box 51, Hungary}
\author{Nicolas Brunner}
\affiliation{Département de Physique Théorique, Université de Genève, 1211 Genève, Switzerland}

\date{\today}  

\begin{abstract}
We investigate the relation between the incompatibility of quantum measurements and quantum nonlocality. 
We show that a set of measurements is not jointly measurable (\ie incompatible) if and only if it can be used for demonstrating Einstein-Podolsky-Rosen steering, a form of quantum nonlocality. Moreover, we discuss the connection between Bell nonlocality and joint measurability, and give evidence that both notions are inequivalent. Specifically, we exhibit a set of incompatible quantum measurements and show that it does not violate a large class of Bell inequalities. This suggest the existence of incompatible quantum measurements which are Bell local, similarly to certain entangled states which admit a local hidden variable model.
\end{abstract}

\maketitle

The correlations resulting from local measurements on an entangled quantum state cannot be explained by a local theory. This aspect of entanglement, termed quantum nonlocality, is captured by two inequivalent notions, namely Bell nonlocality \cite{bell64,brunner_review} and EPR steering \cite{schrodinger36,wiseman07,reid09}. The strongest form of this phenomenon is Bell nonlocality, witnessed via the violation of Bell inequalities. Steering represents a strictly weaker form of quantum nonlocality \cite{wiseman07}, witnessed via violation of steering inequalities \cite{cavalcanti09}. Both aspects have been extensively investigated in recent years, as they play a central role in the foundations of quantum theory and in quantum information processing.

Interestingly quantum nonlocality is based on two central features of quantum theory, namely entanglement and incompatible measurements. Specifically, performing (i) arbitrary local measurements on a separable state, or (ii) compatible measurements on an (arbitrary) quantum state can never lead to any form of quantum nonlocality. Hence the observation of quantum nonlocality implies the presence of both entanglement and incompatible measurements. It is interesting to explore the converse problem. Two types of questions can be asked here (see Fig. \ref{Fig1}): (a) do all entangled states lead to quantum nonlocality? (b) do all sets of incompatible measurements lead to quantum nonlocality?

An intense research effort has been devoted to question (a). First, it was shown that all pure entangled states violate a Bell inequality \cite{gisin91,popescu92}, hence also demonstrating EPR steering. For mixed states, the situation is much more complicated. There exist entangled states which are local, in the sense that no form of quantum nonlocality can be demonstrated with such states when using non-sequential measurements \cite{werner89,barrett02}. These issues become even more subtle when more sophisticated measurement scenarios are considered \cite{popescu95,hirsch13,palazuelos12,cavalcanti12}.

Question (b) has received much less attention so far. In the case of projective measurements, it was shown that incompatible measurements can always lead to Bell nonlocality \cite{tsirelson85,wolf09}. Note that in this case, compatibility is uniquely captured by the notion of commutativity \cite{varabook}. However, for general measurements, \ie positive-operator-valued-measures (POVMs), no general result
is known. In this case, there are several inequivalent notions of compatibility. Here we focus on the notion of joint measurability, see e.g.\cite{buschbook}, as this represents a natural choice in the context of quantum nonlocality. Several works discussed question (b) for POVMs \cite{son05,anderson05}. The strongest result is due to Wolf et al. \cite{wolf09}, who showed that any set of two incompatible POVMs with binary outcomes can always lead to violation of the Clauser-Horne-Shimony-Holt Bell inequality. However, this result may not be extended to the general case (of an arbitrary number of POVMs with arbitrarily many outcomes), since pairwise joint measurability does not imply full joint measurability in general \cite{krausbook}.

\begin{figure}[b!]
\includegraphics[width = \columnwidth]{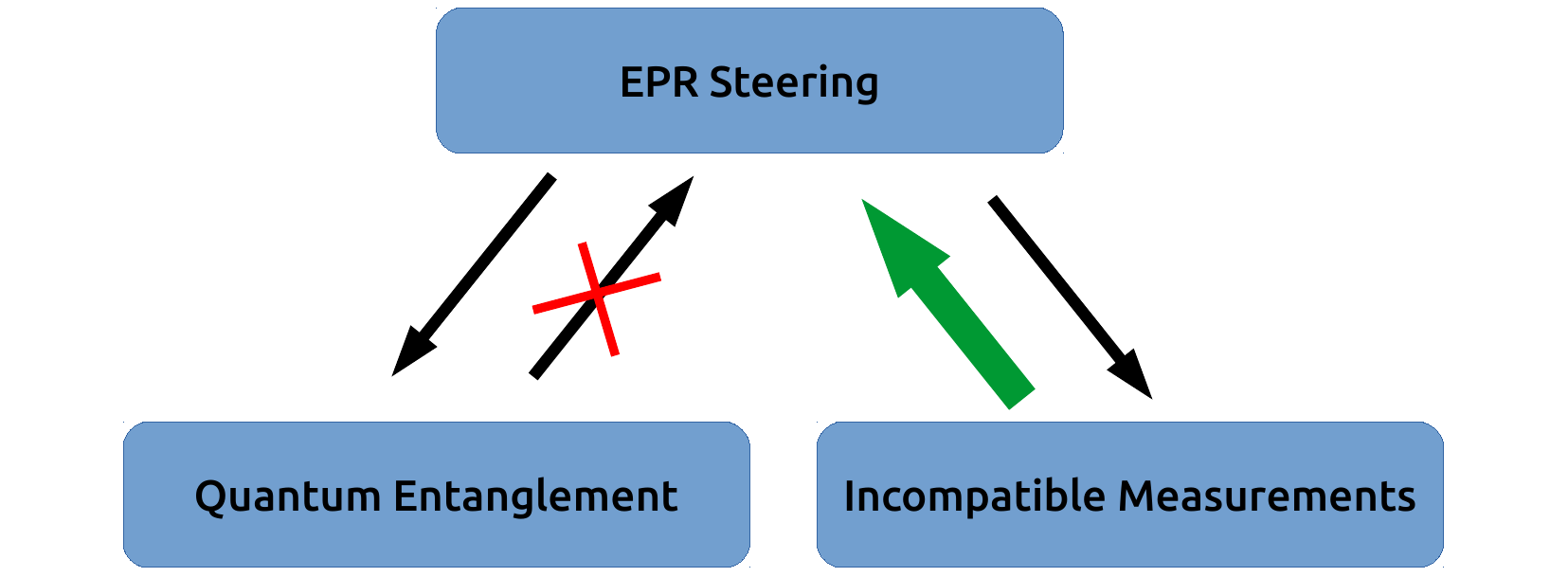}
    \caption{The observation of EPR steering, a form of quantum nonlocality, implies the presence of both entanglement and incompatible measurements. Whether the converse links hold is an interesting question. Here we make progress in this direction by showing that any set of incompatible measurements can be used to demonstrate EPR steering (green arrow).}
\label{Fig1}
\end{figure}

Here we explore the relation between compatibility of general quantum measurements and quantum nonlocality. We start by demonstrating a direct link between joint measurability and EPR steering. Specifically, we show that for any set of POVMs that is incompatible (\ie not jointly measurable), one can find an entangled state, such that the resulting statistics violates a steering inequality. Hence the use of incompatible is a necessary and sufficient ingredient for demonstrating EPR steering.

This raises the question of how joint measurability relates to Bell nonlocality. Specifically, the question is whether, for any set of incompatible POVMs (for Alice), one can find an entangled state and a set of local measurements (for Bob), such that the resulting statistics violates a Bell inequality. Here we give evidence that the answer is negative. In particular, we exhibit sets of incompatible measurements which can provably not violate a large class of Bell inequalities (including all full correlation Bell inequalities, also known as XOR games, see \cite{brunner_review}). We therefore conjecture that non joint measurability and Bell nonlocality are inequivalent. Hence, similarly to local entangled states, there may exist incompatible quantum measurements which are Bell local.

\emph{Steering vs joint measurability.} We start by defining the relevant scenario and notations. We consider two separated observers, Alice and Bob, performing local measurements on a shared quantum state $\rho_{AB}$. Alice's measurements are represented by operators $M_{a|x}$ such that $\sum_a M_{a|x} = \openone$, where $x$ denotes the choice of measurement and $a$ its outcome. Upon performing measurement $x$, and obtaining outcome $a$, the (unnormalized) state held by Bob is given by
\begin{equation}
\sigma_{a\vert x}  = \tr_A(\rho_{AB} M_{a\vert x} \otimes \openone ).
\end{equation}
The set of unnormalised states $\{\sigma_{a\vert x} \}$, referred to as an \textit{assemblage}, completely characterizes the experiment, since
$\tr(\sigma_{a\vert x})$ is the probability of Alice getting the output $a$ (for measurement $x$) and given that information and Bob's state is described by ${\sigma_{a\vert x}}/{\tr(\sigma_{a\vert x})}$. Importantly,
one has that $\sum_a \sigma_{a\vert x} = \sum_a \sigma_{a\vert x'}$ for all measurements ${x}$ and ${x}'$, ensuring that Alice cannot signal to Bob.

In a steering test \cite{wiseman07}, Alice want to convince Bob that the state $\rho_{AB}$ is entangled, and that she can steer his state. Bob does not trust Alice, and thus wants to verify Alice's claim. Asking Alice to perform a given measurement $x$, and to announce the outcome $a$, Bob can determine the assemblage ${\sigma_{a\vert x}}$ via local quantum tomography. To ensure that steering did indeed occur, Bob should verify that the assemblage does not admit a decomposition of the form
\begin{equation} \label{LHS}
    \sigma_{a\vert x}= \sum_\lambda \pi(\lambda) p(a\vert x,\lambda) \sigma_\lambda,
\end{equation}
where $\sum_\lambda \pi(\lambda)=1$. Clearly, if a decomposition of the above form exists, then Alice could have cheated by sending the (unentangled) state $\sigma_\lambda$ to Bob and announce outcome $a$ to Bob according to the distribution $p(a\vert x,\lambda)$. Note that here $\lambda$ represents a local variable of Alice, representing her choice of strategy.

Assemblages of the form \eqref{LHS} are termed 'unsteerable' and form a convex set \cite{pusey13,skrzypczyk14}. Hence any 'steerable' assemblage can be detected via a set of linear witnesses called steering inequalities \cite{cavalcanti09}. By observing violation of a steering inequality, Bob will therefore be convinced that Alice can steer his state.

For a demonstration of steering, it is necessary for the state $\rho_{AB}$ to be entangled. However, not all entangled states can be used to demonstrate steering \cite{wiseman07,barrett02,bowles14}; at least not when non-sequential measurements are performed on a single copy of $\rho_{AB}$.

Moreover, steering also requires that the measurements performed by Alice are incompatible. To capture the compatibility of a set of quantum measurements we use here the notion of joint measurability, see e.g. \cite{buschbook}. A set of $m$ POVMs $M_{a|x}$ is called jointly measurable if there exists a measurement $M_{\vec{a}}$ with outcome $\vec{a}=[a_{x=1},a_{x=2},\ldots,a_{x=m}]$ where $a_{x}$ gives the outcome of measurement $x$, that is
\begin{align}
         M_{\vec{a}} \geq  0 , \quad \sum_{\vec{a}} M_{\vec{a}} =  \openone , \quad
        \sum_{\vec{a}\setminus a_x}M_{\vec{a}} = M_{a\vert x}  \;,
\end{align}
where $\vec{a}\setminus a_x$ stands for the elements of $\vec{a}$ except for $a_x$. Hence, all POVM elements $M_{a|x}$ are recovered as marginals of the \textit{Mother Observable} $M_{\vec{a}}$. Importantly, the joint measurability of a set of POVMs does not imply that they commute \cite{kru}. Hence joint measurability is a strictly weaker notion of compatibility for POVMs. Moreover, joint measurability is not transitive. For instance, pairwise joint measurability does not imply full joint measurability in general \cite{krausbook} (see below).

Our main result is to establish a direct link between joint measurability and steering. Specifically, we show that a set of POVMs can be used to demonstrate steering if and only if it is not jointly measurable. More formally we prove the following result.

\begin{theorem}
    The assemblage $\{ \sigma_{a|x} \}$, with $\sigma_{a|x} = \tr_A(\rho_{AB} M_{a\vert x} \otimes \openone )$, is unsteerable for any state $\rho_{AB}$ acting in $ \mathbb{C}^d \otimes \mathbb{C}^d$ if and only if the set of POVMs $\{ M_{a\vert x} \}$ acting on $\mathbb{C}^d$ is jointly measurable.
\end{theorem}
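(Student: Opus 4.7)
\emph{Proof plan.} I would prove the two directions by explicit constructions that convert the data of one side into a certificate for the other.

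For the ``if'' direction, suppose $\{M_{a|x}\}$ is jointly measurable with mother observable $M_{\vec{a}}$, and let $\rho_{AB}$ be arbitrary. I would exhibit a decomposition of the form \eqref{LHS} by taking the hidden variable to be $\lambda = \vec{a}$, setting $\pi(\vec{a})\,\sigma_{\vec{a}} \coloneqq \tr_A(\rho_{AB}\, M_{\vec{a}} \otimes \openone)$, and choosing the deterministic response $p(a|x,\vec{a}) = \delta_{a,a_x}$. The marginalisation property $\sum_{\vec{a}\setminus a_x} M_{\vec{a}} = M_{a|x}$ then directly yields $\sum_{\vec{a}} \pi(\vec{a})\, p(a|x,\vec{a})\, \sigma_{\vec{a}} = \sigma_{a|x}$, so the assemblage is unsteerable.

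For the ``only if'' direction, I would exploit the freedom to pick any $\rho_{AB}$ and specialise to the maximally entangled state $\ket{\phi^+} = \frac{1}{\sqrt{d}}\sum_i \ket{ii}$. A short calculation gives the Jamio{\l}kowski-type identity $\sigma_{a|x} = M_{a|x}^T / d$, so the assumed decomposition \eqref{LHS} becomes a decomposition of the transposed POVM elements. I would then invoke the standard reduction that the responses $p(a|x,\lambda)$ may be taken deterministic (by absorbing any stochastic mixing into $\pi(\lambda)$); the extremal strategies are indexed precisely by vectors $\vec{a}$ via $a = a_x$ when input $x$ is queried. The decomposition thus simplifies to $M_{a|x}^T/d = \sum_{\vec{a}} \pi(\vec{a})\, \delta_{a,a_x}\, \sigma_{\vec{a}}$.

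Finally, I would define the candidate mother observable by $M_{\vec{a}} \coloneqq d\,\pi(\vec{a})\,\sigma_{\vec{a}}^T$ and verify its three defining properties: positivity is immediate from $\pi(\vec{a}) \geq 0$ and $\sigma_{\vec{a}} \geq 0$; summing over $\vec{a}$ and using the no-signalling constraint $\sum_a \sigma_{a|x} = \openone/d$ gives $\sum_{\vec{a}} M_{\vec{a}} = \openone$; and restricting the sum to $\{\vec{a} : a_x = a\}$ recovers exactly $M_{a|x}$ by construction. The main obstacle I anticipate is the ``only if'' direction: the decisive insight is that one need not quantify over all $\rho_{AB}$ but that the maximally entangled state alone suffices, together with the explicit reduction of hidden-variable strategies to deterministic ones so that the hidden index $\lambda$ is consistently identified with an output vector $\vec{a}$ on which the mother observable lives.
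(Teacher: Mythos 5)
Your proposal is correct and follows essentially the same route as the paper: the same deterministic-response construction for the ``if'' direction, and for the ``only if'' direction the same idea of pulling back an unsteerable decomposition through the (transposing, invertible) map $M \mapsto \tr_A(\rho_{AB}\, M\otimes\openone)$ to build a mother observable $M_{\vec{a}} \propto \sigma_{\vec{a}}^T$. The only difference is that you specialise to the maximally entangled state, whereas the paper uses an arbitrary pure state of full Schmidt rank $\ket{\psi}=(D\otimes\openone)\ket{\Phi}$ with $M_{\vec{a}}=D^{-1}\sigma_{\vec{a}}^TD^{-1}$, which yields the slightly stronger by-product that \emph{any} such state reveals the steering; for the theorem as stated your choice suffices.
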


\begin{proof}
The 'if' part is straightforward. Our goal is to show that $\{ \sigma_{a|x} \}$ admits a decomposition of the form \eqref{LHS} when $\{ M_{a\vert x} \}$ is jointly measurable, for any state $\rho_{AB}$. Consider $M_{\vec{a}}$, the mother observable for $\{ M_{a\vert x} \}$, and define Alice's local variable to be $\lambda = \vec{a} $, distributed according to $\Pi(\vec{a}) = \tr(M_{\vec{a}} \rho_A)$, where $\rho_A = \tr_B(\rho_{AB})$. Next Alice sends the local state $\sigma_{\vec{a}} = \tr_A(M_{\vec{a}} \otimes \openone \rho_{AB}) /  \Pi(\vec{a})$. When asked by Bob to perform measurement $x$, Alice announces an outcome $a$ according to $p(a|x,\vec{a}) = \delta_{a,a_x}$.

We now move to the 'only if' part. Consider an arbitrary pure state $\rho_{AB}=\ketbra{\psi}{\psi}$ with Schmidt number $d$. Notice that we can always write $\ket{\psi}=(D \otimes \openone) \ket{\Phi}$, where $\ket{\Phi}=\sum_i \ket{ii}$ is an (unormalized) maximally entangled state in $\mathbb{C}^d \otimes \mathbb{C}^d$, and $D$ is diagonal matrix that contains only strictly positive numbers. The assemblage resulting from a set of POVMs $\{ M_{a\vert x} \}$ on $\rho_{AB}$ is given by
\begin{equation}
\sigma_{a\vert x}= \tr_A(M_{a|x} \otimes \openone \ketbra{\psi}{\psi}) = D M_{a\vert x}^T D
\end{equation}
where $M_{a\vert x}^T$ is the transpose of $M_{a\vert x}$. Our goal is now to show that if $\sigma_{a\vert x}$ is unsteerable then $\{ M_{a\vert x} \}$ is jointly measurable. As $\sigma_{a\vert x}$ is unsteerable, we have that 
\begin{equation} \sigma_{a\vert x}=\sum_\lambda \pi(\lambda) p(a\vert x,\lambda) \sigma_\lambda, 
\end{equation}
which allows us to define the positive definite operator
\begin{equation}
\sigma_{\vec{a}} =\sum_\lambda \pi(\lambda) \sigma_\lambda \prod_x p(a_x \vert x,\lambda)
\end{equation}
form which we can recover the assemblage $\{\sigma_{a\vert x}\}$ as marginals, \ie $\sigma_{a\vert x}=\sum_{\vec{a}\setminus a_x} \sigma_{\vec{a}}$. Since the diagonal matrix $D$ is invertible, we can define $M_{\vec{a}}:= D^{-1}\sigma_{\vec{a}}^T D^{-1}$. It is straightforward to check that $M_{\vec{a}}$ is a mother observable for $\{M_{a\vert x}\}$: (i) it is positive, (ii) sums to identity, and (iii) has POVM elements $M_{a\vert x}$ as marginals. Hence $\{M_{a\vert x}\}$ is jointly measurable, which concludes the proof. Note finally an interesting point that follows from the above. Considering a set of incompatible measurements acting on $\mathbb{C}^d$, any pure entangled state of the Schmidt number $d$ can be used to demonstrate EPR steering. 
\end{proof}

\emph{Bell nonlocality vs joint measurability.} It is natural to ask whether the above connection, between joint measurability and steering, can be extended to Bell nonlocality. Recall that in a Bell test, both observers Alice and Bob are on the same footing, and test the strength of the shared correlations. Specifically, Alice chooses a measurement $x$ (Bob chooses $y$) and gets outcome $a$ (Bob gets $b$). The correlation is thus described by a joint probability distribution $p(ab|xy)$. The latter can be reproduced by a pre-determined classical strategy if it admits a decomposition of the form
\begin{equation} \label{LHV}
    p(ab \vert xy)= \sum_\lambda \pi(\lambda) p(a\vert x,\lambda) p(b \vert y,\lambda).
\end{equation}
where $\lambda$ represents the shared local (hidden) variable, and $ \sum_\lambda \pi(\lambda)=1$. Any distribution that does not admit a decomposition of the above form is said to be Bell nonlocal. The set of local distributions, \ie of the form \eqref{LHV} is convex, and can thus be characterized by a set of linear inequalities called Bell inequalities \cite{brunner_review}. Hence violation of a Bell inequality implies Bell nonlocality.

In quantum theory, Bell nonlocal distributions can be obtained by performing suitably chosen local measurements, $M_{a|x}$ and $M_{b|y}$, on an entangled state $\rho_{AB}$. In this case, the resulting distribution $p(ab|xy) = \tr(\rho_{AB} M_{a|x} \otimes M_{b|y})$ does not admit a decomposition of the form \eqref{LHV}. Bell nonlocality is however not a generic feature of entangled quantum states. That is, there exist mixed entangled states which are local, in the sense that the statistics resulting from arbitrary non-sequential local measurements can be reproduced by a local model \cite{werner89,barrett02,hirsch13}.

Given the above, we investigate now how joint measurability relates to Bell nonlocality. First the above theorem implies that, if the set of POVMs $\{ M_{a|x} \}$ used by Alice is jointly measurable, then the statistics $p(ab|xy)$ can always be reproduced by a local model, for any state $\rho_{AB}$ and measurements of Bob $\{ M_{b|y} \}$. The converse problem is much more interesting. The question is whether for any set of POVMs $\{ M_{a|x} \}$ that is not jointly measurable, there exists a state $\rho_{AB}$ and a set of measurements $\{ M_{b|y} \}$ such that the resulting statistics $p(ab|xy)$ violates a Bell inequality. This was shown to hold true for the case of sets of two POVMs with binary outcomes \cite{wolf09}. In this case, joint measurability is equivalent to violation of the CHSH Bell inequality. Here we give evidence that this connection does not hold in general. Specifically, we exhibit a set of POVMs which is not jointly measurable but nevertheless cannot violate a large class of Bell inequalities.

Consider the set of three dichotomic POVMs (acting on $\mathbb{C}^2$) given by the following positive operators
\begin{align} \label{hollow_triangle}
M_{0|x}'(\eta) = \frac{1}{2} \left( \openone + \eta \sigma_x \right)
\end{align}
for $x=1,2,3$, where $\sigma_1,\sigma_2,\sigma_3$ are the Pauli matrices, and $0\leq \eta\leq 1$. Indeed, one has that $M_{1|x}(\eta)'= \openone - M_{0|x}'(\eta)$. This set of POVMs should be understood as noisy Pauli measurements. The set is jointly measurable if and only if $\eta\leq 1/\sqrt{3}$, although any pair of POVMs is jointly measurable for $\eta\leq 1/\sqrt{2}$ \cite{teiko08} (see also \cite{YC}). Hence in the range $1/\sqrt{3}\leq \eta\leq 1/\sqrt{2}$, the set $\{ M_{a|x}'(\eta) \}$ forms a \textit{hollow triangle}: it is pairwise jointly measurable but not fully jointly measurable.

We now investigate whether the above hollow triangle can lead to
Bell inequality violation. The most general class of Bell
inequalities to be considered here are of the form~\footnote{Note
that in principle Bob may use POVMs. Since we can restrict to pure
two-qubit entangled states, it is sufficient to consider for Bob
POVM with 4 outcomes: see \cite{dariano05}.}:
\begin{equation} \label{BI}
  I = \sum_{x=1}^3 \sum_{y=1}^n \gamma_{xy} \mean{A_xB_y}+\sum_{x=1}^3 \alpha_{x} \mean{A_x}+ \sum_{y=1}^n \beta_{y} \mean{B_y}\leq 1
\end{equation}
where
\begin{align} \label{corrs}
    &\mean{A_xB_y}= p(a=b\vert xy)-p(a\neq b\vert xy); \\
     &\mean{A_x}= p(0\vert x)-p(1\vert x), \; \mean{B_y}= p(0\vert y)-p(1\vert y) \nonumber .
\end{align}
All (tight) Bell inequalities of the above form for $n\leq 5$ are known (see Appendix). Using a numerical method based on semi-definite-programming (SDP) \cite{navascues12} (see Appendix) we could find the smallest value of the parameter $\eta$ for which a given inequality can be violated using the set of POVMs \eqref{hollow_triangle}. The results are summarized in Table I. Notably, we could not find a violation in the range $1/\sqrt{3}\leq \eta\leq 1/\sqrt{2}$ where the set $\{ M_{a|x}'(\eta) \}$ is a hollow triangle. In fact, no violation was found for $\eta\leq 0.7786$, whereas pairwise joint measurability is achieved for $\eta\leq 1/\sqrt{2} \simeq 0.7071$, thus leaving a large gap. Note also that pairwise joint measurability implies violation of the CHSH inequality here, since we have POVMs with binary outcomes \cite{wolf09}. We thus conjecture that there is a threshold value $\eta^*>1/\sqrt{3}$, such that all hollow triangles with $ 1/\sqrt{3} < \eta \leq \eta^*$ do not violate any Bell inequality.
\begin{table}[t!] \label{table}
\begin{tabular}{|c||c|c|}
\hline
& $\{ M_{a|x}'(\eta) \}$ & $ \{ M_{a|x}''(\eta) \} $ \\
\hline \hline
Pairwise JM (CHSH violation) & $1/\sqrt{2}\approx 0.7071$ & 0.5858 \\
\hline
Triplewise JM & $1/\sqrt{3}\approx 0.5774$ & 0.4226 \\
\hline \hline
Bell violation ($n=3$): $I_{3322}$ & 0.8037 & 0.6635 \\
\hline
Bell violation ($n=4$): $I^1_{3422}$ & 0.8522 & 0.7913 \\
$I^2_{3422}$ & 0.8323 & 0.5636 \\
$I^3_{3422}$ & 0.8188 & 0.6795 \\
\hline
Bell violation ($n=5$): $I_{3522}$ & 0.7786 & 0.5636 \\
\hline
\end{tabular}
\caption{ Bell inequality violation with incompatible POVMs. Specifically, we consider the sets given in equations \eqref{hollow_triangle} and \eqref{HT2}. For each set, we determine the smallest value of the parameter $\eta$, such that the set becomes Jointly Measurable (JM), and achieve Bell inequality violation. We consider tight Bell inequalities with up to $n=5$ measurements for Bob (see Appendix). Note that pairwise joint measurability is equivalent to violation of the CHSH Bell inequality.}

\end{table}

Moreover, we can also show that a large class of Bell inequalities of the form \eqref{BI} (for arbitrary $n$) cannot be violated using the hollow triangle \eqref{hollow_triangle}. Note that for the set of POVMs \eqref{hollow_triangle}, we have that $\mean{A_x} = \eta \tr(\sigma_x \rho_A)$ and $\mean{A_xB_y}=  \eta  \tr(\sigma_x  \otimes M_{b|y} \rho_{AB})$ for $x=1,2,3$. Hence we can write the Bell polynomial as
\begin{equation}
I = \eta \tilde{I} + (1-\eta) \sum_{y} \beta_{y} \mean{B_y}
\end{equation}
where $\tilde{I}$ is the Bell expression $I$ evaluated for projective (Pauli) measurements on Alice's side. Note that
$\tilde{I} \leq I_{\mathbb{C}^2}$, where $I_{\mathbb{C}^2}$ denotes the maximal value of $I$
 for qubit strategies. Hence no Bell inequality violation is possible when
\begin{equation}
\sum_y |\beta_y| \leq \frac{1- \eta I_{\mathbb{C}^2} }{1-\eta},
\end{equation}
given that $\eta I_{\mathbb{C}^2} \leq 1$.
 Notably this includes all full correlation Bell inequalities ($\alpha_x=\beta_y=0$), \ie XOR games, for which it is known that the amount of violation is upper bounded for qubit strategies. More precisely, one has that $I_{\mathbb{C}^2} \leq K_3$ \cite{tsirelson93,tamas} where $K_3 \leq 1.5163$ is the Grothendieck constant of order 3. Hence, for $  1/ \sqrt{3} < \eta < 1/K_3\simeq 0.6595 $ we get that the hollow triangle \eqref{hollow_triangle} cannot violate any full correlation Bell inequality.

From the above, one may actually wonder whether Bell inequality violation is possible at all using set of POVMs forming a hollow triangle. We now show that this is the case. Consider the set of three dichotomic POVMs (acting on $\mathbb{C}^2$) given by the following positive operators
\begin{align} \label{HT2}
M_{0|x}''(\eta) =  \frac{\eta}{2} \left( \openone + \sigma_x \right)
\end{align}
for $x=1,2,3$ and $0\leq \eta\leq 1$. Again, one has that
$M_{1|x}''(\eta)= \openone - M_{0|x}''(\eta)$. To determine the
range of the parameter $\eta$ for which the above set of POVMs is
pairwise jointly measurable, and fully jointly measurable, we use
the SDP techniques of Ref. \cite{wolf09}. We find that the set $\{
M_{a|x}''(\eta) \}$ is a hollow triangle for $0.4226 \leq \eta
\leq 0.5858$. However, Bell nonlocality can be obtained by
considering a Bell inequality with $n=4$ measurements for Bob, for
$\eta>0.5636$. Values are summarized in Table I, while details of
the construction are given in the Appendix. This shows that a set
of partially compatible measurements, here a hollow triangle, can
be used to violate a Bell inequality. Moreover, this suggests that
detecting the nonlocality of a set of 3 incompatible POVMs is a
hard problem, since a large number of measurements on Bob's side
(possibly infinite) might be needed. This contrasts with the case
of two POVMs, where two measurements (via CHSH) were enough
\cite{wolf09}. Finally, note that we could find a hollow triangle
with only real numbers (\ie with all Bloch vectors in a plane of
the sphere) which violates a Bell inequality with $n=3$
measurements \cite{collins04}, \ie the simplest possible case (see Appendix).

Finally, an interesting open question is the following. Considering a set of arbitrarily many POVMs, it is known that any partial compatibility configuration can be realized \cite{fritz14}. Is it then possible to violate a Bell inequality for any possible configuration?

\emph{Discussion.} We have discussed the relation between joint measurability and quantum nonlocality. First, we showed that a set of  POVMs is incompatible if and only if it can be used to demonstrate EPR steering. Hence, EPR steering provides a new operational interpretation of joint measurability. Second, we explored the link between joint measurability and Bell nonlocality. We gave evidence that these two notions are inequivalent, by showing that a hollow triangle (a set of 3 POVMs that is only pairwise compatible) can never lead to violation of a large class of Bell inequalities. We conjecture that this hollow triangle is Bell local, that is, it cannot be used to violate any Bell inequality. Hence such a measurement would represent the analogue, for a quantum measurement, of a local entangled state.

\emph{Acknowledgements.}
We thank J. Bowles, D. Cavalcanti, R. Chaves, F. Hirsch, M. Navascues, S. Pironio and P. Skrzypczyk for discussions, and D. Rosset for help with the website faacets.com. We acknowledge financial support from the Swiss National Science Foundation (grant PP00P2\_138917 and QSIT), the SEFRI (COST action MP 1006) and the EU DIQIP, the J\'anos Bolyai Programme, the OTKA (PD101461) and the T\'AMOP-4.2.2.C-11/1/KONV-2012-0001 project.

\emph{Note added.} While the present work was under review, we
became aware of a related work \cite{Uola}.



\providecommand{\href}[2]{#2}\begingroup\raggedright\endgroup

\end{document}